\title{Epistemic Logic Programs: A Different World View}
\author{Michael Morak
  \institute{University of Klagenfurt\\Klagenfurt, Austria}
  \email{michael.morak@aau.at}
}
\newcommand{\nop}[1]{}
\newenvironment{changemargin}[2]{%
\list{}{\rightmargin#2\leftmargin#1
\parsep=0pt\topsep=0pt\partopsep=0pt}
\item[]}
{\endlist}
\newenvironment{indented}{\begin{changemargin}{1cm}{0cm}}{\end{changemargin}}
\newtheorem{theorem}{Theorem}
\newtheorem{corollary}[theorem]{Corollary}
\newtheorem{proposition}[theorem]{Proposition}
\newtheorem{lemma}[theorem]{Lemma}
\newtheorem{definition}[theorem]{Definition}
\newtheorem{example}[theorem]{Example}
\let\phi\varphi
\let\epsilon\varepsilon
\renewcommand{\models}{\vDash}
\newcommand{\calA}{\mathcal{A}}
\newcommand{\calC}{\mathcal{C}}
\newcommand{\calE}{\mathcal{E}}
\newcommand{\calI}{\mathcal{I}}
\newcommand{\calJ}{\mathcal{J}}
\newcommand{\calL}{\mathcal{L}}
\newcommand{\calM}{\mathcal{M}}
\newcommand{\calR}{\mathcal{R}}
\newcommand{\calS}{\mathcal{S}}
\newcommand{\NP}{\ensuremath{\textsc{NP}}}
\newcommand{\co}{\ensuremath{\textsc{co}}}
\newcommand{\SIGMA}[2]{\ensuremath{\Sigma_{\mathit{#2}}^{\mathit{#1}}}}
\newcommand{\constant}[1]{\mathit{#1}}
\newcommand{\variable}[1]{\mathit{#1}}
\newcommand{\variables}[1]{{\mathbf{#1}}}
\newcommand{\mods}[1]{\mathit{mods}(#1)}
\newcommand{\answersets}[1]{\mathit{AS}(#1)}
\newcommand{\semods}[1]{\calS\calE(#1)}
\newcommand{\secwvs}[1]{\text{se-cwv}(#1)}
\newcommand{\cwvs}[1]{\text{cwv}(#1)}
\newcommand{\wvs}[1]{\text{wv}(#1)}
\newcommand{\varX}{\variable{X}}
\newcommand{\varY}{\variable{Y}}
\newcommand{\varZ}{\variable{Z}}
\newcommand{\varsX}{\variables{X}}
\newcommand{\varsY}{\variables{Y}}
\newcommand{\varsZ}{\variables{Z}}
\newcommand{\relation}[1]{{\mathit{#1}}}
\newcommand{\fullatom}[2]{{\relation{#1}(#2)}}
\newcommand{\elitof}[1]{\mathit{elit}(#1)}
\newcommand{\eneg}{\mathbf{not}\,}
\newcommand{\body}[1]{{\mathit{B}(#1)}}
\newcommand{\pbody}[1]{{\mathit{B}^+(#1)}}
\newcommand{\head}[1]{{\mathit{H}(#1)}}
\begin{document}

\maketitle

\begin{abstract}
  Epistemic Logic Programs (ELPs), an extension of Answer Set Programming (ASP)
with epistemic operators, have received renewed attention from the research
community in recent years. Classically, evaluating an ELP yields a set of world
views, with each being a set of answer sets. In this paper, we propose an
alternative definition of world views that represents them as three-valued
assignments, where each atom can be either always true, always false, or
neither. Based on several examples, we show that this definition is natural and
intuitive. We also investigate relevant computational properties of these new
semantics, and explore how other notions, like strong equivalence, are affected.

\end{abstract}

\section{Introduction}\label{sec:introduction}

Answer Set Programming (ASP) is a generic, fully declarative logic programming
language that allows users to encode problems such that the resulting output of
the program (called \emph{answer sets}) directly corresponds to solutions of the
original problem \cite{cacm:BrewkaET11,ki:SchaubW18}. Negation in ASP is
generally interpreted according to the stable model semantics
\cite{iclp:GelfondL88,ngc:GelfondL91}, that is, as negation-as-failure, also
called default negation. The negated version $\neg a$ of an atom $a$ is true if
there is no justification for $a$ in the same answer set. Hence, default
negation is a kind of ``local'' operator in the sense that it is defined
relative to the answer set that is currently considered.

Epistemic Logic Programs (ELPs) are an extension of ASP with epistemic
operators. In this paper we will mainly consider the epistemic negation operator
$\eneg$, defined in \cite{ai:ShenE16}. It allows for a form of meta-reasoning,
that is, reasoning over multiple answer sets.  Intuitively, an epistemically
negated atom $\eneg a$ expresses that $a$ is false in at least one answer set,
that is, that it is not true everywhere.  Thus, epistemic negation is defined
relative to a collection of answer sets, which \cite{ai:ShenE16} refer to as a
world view. Deciding whether such a world view exists is known to be
\SIGMA{P}{3}-complete~\cite{ai:ShenE16}, one level higher on the polynomial
hierarchy than deciding answer set existence~\cite{amai:EiterG95}.

Originally introduced as two modal operators $\mathbf{K}$ (``known'' or
``provably true'') and $\mathbf{M}$ (``possible'' or ``not provably false'') by
Gelfond \cite{aaai:Gelfond91,amai:Gelfond94}, epistemic negation in ASP has
received renewed interest (c.f.\ e.g.\
\cite{lpnmr:Gelfond11,birthday:Truszczynski11,diss:Kahl14,ijcai:CerroHS15,%
ai:ShenE16,ijcai:SonLKL17,iclp:KahlL18,aaai:FaberMW19}), with refinements of the
semantics and proposals of new language features. Further, the development of
efficient solving systems is underway with several working systems now available
\cite{logcom:KahlWBGZ15,ijcai:SonLKL17,ijcai:BichlerMW18}.

While ELPs offer the same advantages as ASP, namely a fully declarative,
intuitive language for modelling and problem encoding, in this paper we would
like to argue that the notion of the world view, as proposed in works on ELPs up
to date, is not an intuitive construct that end users of the ELP language can
directly use. Consider the following example:

\begin{example}\label{ex:intro1}
  We consider a classical example for the need of epistemic operators,
  investigated in \cite{aaai:Gelfond91,lpnmr:Gelfond11,ai:ShenE16}, namely a
  slightly simplified version of the scholarship eligibility problem. For a
  given student, we would like to decide whether they are eligible for a
  scholarship or ineligible. The rules say that someone with a high GPA is
  eligible while with a low GPA they are ineligible.
  \begin{itemize}
    \item $\relation{eligible} \gets \relation{highGPA}$
    \item $\relation{ineligible} \gets \relation{lowGPA}$
    \item $\bot \gets \relation{eligible}, \relation{ineligible}$
  \end{itemize}
  Further, we know that the student under consideration has either a high GPA or
  a fair one, but not a low one.
  \begin{itemize}
    \item $\relation{highGPA} \vee \relation{fairGPA}$
  \end{itemize}
  Finally, we use epistemic negation to say that if we can neither prove that
  the student is always eligible or always ineligible, we want to interview
  them.
  \begin{itemize}
    \item $\relation{interview} \gets \eneg \relation{eligible}, \eneg
      \relation{ineligible}$
  \end{itemize}
  According to \cite[Example~3]{ai:ShenE16} there is exactly one world view,
  namely the set of answer sets $\{ M_1, M_2 \}$ where the answer set $M_1 = \{
    \relation{fairGPA}, \relation{interview} \}$ and answer set $M_2 = \{
      \relation{highGPA}, \relation{eligible}, \relation{interview} \}$.
\end{example}

In the above example, it is not easy to extract the relevant information for the
end user from the world view. In fact, they would have to check each answer set
to answer their original question about whether students are eligible,
ineligible, or should be interviewed. In the example above,
$\relation{interview}$ is true in all answer sets of the world view, and hence,
the student should be interviewed. In addition, the individual answer sets
contain a lot of information not directly relevant to the question of the user.
A result that directly represents the desired results, that is, a different
version of the world view that simply states that $\relation{interview}$ is
always true, while $\relation{ineligible}$ is always false, would, in the
authors' opinion, be a more intuitive representation of the result of the ELP of
Example~\ref{ex:intro1}, as it would directly answer the question whether the
student under consideration is eligible, ineligible, or should be interviewed.
The aim of this paper is to propose such a notion, and to investigate its
consequences for evaluating ELPs.

\paragraph{Contributions.} The results and contributions presented in this paper
can be summarized as follows.

\begin{itemize}
  \item We provide a different definition of the notion of ``world view'' for
    ELPs, such that relevant information about the solution to an ELP is
    represented directly, and hence easily accessible to an end user. This is
    done by defining the world view not as a collection of answer sets, but as a
    three-valued model that encodes this information.
  \item We investigate the relationship of our new world view definition with
    the semantics of ELPs provided by Shen and Eiter \cite{ai:ShenE16} and
    establish a close connection between the notion of the epistemic guess
    proposed by these authors, and our new definition of a world view.
  \item We investigate notions of equivalence under our new semantics. Here, we
    show that equivalence under our definition of a world view is more general
    than the one under the world view definition of \cite{ai:ShenE16}, but that
    for strong equivalence, the two semantics coincide.
  \item We study the computational complexity of the world view existence
    problem, and of equivalence testing.
  \item Finally, as a case study, we investigate the problem of QSAT-solving,
    and show that our notion of the world view can more intuitively capture the
    relevant part of the solution to a given QSAT problem.
\end{itemize}

\paragraph{Structure.} The remainder of the paper is structured as follows. In
Section~\ref{sec:preliminaries}, we provide an overview of ASP and ELPs. In
Section~\ref{sec:worldviews}, our novel definition of the world view is proposed
and investigated. Notions of equivalence are studied in
Section~\ref{sec:equivalence}, and a case study is performed in
Section~\ref{sec:casestudies}. We conclude with a summary in
Section~\ref{sec:conclusions}.

\section{Preliminaries}\label{sec:preliminaries}

This section provides relevant details on Answer Set Programming and Epistemic
Logic Programs.

\subsection{Answer Set Programming (ASP)}

A \emph{ground logic program} with double negation (also called answer set
program, ASP program, or, simply, logic program) is a pair $\Pi = (\calA,
\calR)$, where $\calA$ is a set of propositional (i.e.\ ground) atoms and
$\calR$ is a set of rules of the form
\begin{equation}\label{eq:rule}
  a_1\vee \cdots \vee a_l \leftarrow a_{l+1}, \ldots, a_m, \neg \ell_1, \ldots,
  \neg \ell_n;
\end{equation}
where the comma symbol stands for conjunction, $0 \leq l \leq m$, $0 \leq n$,
$a_i \in \calA$ for all $1 \leq i \leq m$, and each $\ell_i$ is a
\emph{literal}, that is, either an atom $a$ or its (default) negation $\neg a$
for any atom $a \in \calA$. Note that, therefore, doubly negated atoms may
occur. Each rule $r \in \calR$ of form~(\ref{eq:rule}) consists of a \emph{head}
$\head{r} = \{ a_1,\ldots,a_l \}$ and a \emph{body} $\body{r} =
\{a_{l+1},\ldots,a_m, \neg \ell_1, \ldots, \neg \ell_n \}$. We denote the
\emph{positive} body by $\pbody{r} = \{ a_{l+1}, \ldots, a_m \}$.

An \emph{interpretation} $I$ (over $\calA$) is a set of atoms, that is, $I
\subseteq \calA$.  A literal $\ell$ is true in an interpretation $I \subseteq
\calA$, denoted $I \models \ell$, if $a \in I$ and $\ell = a$, or if $a \not\in
I$ and $\ell = \neg a$; otherwise $\ell$ is false in $I$, denoted $I \not\models
\ell$. Finally, for some literal $\ell$, we define that $I \models \neg \ell$ if
$I \not\models \ell$. This notation naturally extends to sets of literals. An
interpretation $M$ is called a \emph{model} of $r$, denoted $M \models r$, if,
whenever $M \models \body{r}$, it holds that $M \models \head{r}$. We denote the
set of models of $r$ by $\mods{r}$; the models of a logic program $\Pi=
(\calA,\calR)$ are given by $\mods{\Pi} = \bigcap_{r \in \calR} \mods{r}$. We
also write $I\models r$ (resp.\ $I\models \Pi$) if $I\in\mods{r}$ (resp.\
$I\in\mods{\Pi}$).

The GL-reduct $\Pi^I$ of a logic program $\Pi = (\calA, \calR)$ w.r.t.\ an
interpretation $I$ is defined as $\Pi^I = (\calA, \calR^I)$, where $\calR^I = \{
  \head{r} \leftarrow \pbody{r} \mid r \in \calR, \forall \neg \ell \in \body{r}
: I \models \neg \ell \}$.

\begin{definition}\label{def:answerset}
  \cite{iclp:GelfondL88,ngc:GelfondL91,amai:LifschitzTT99} $M \subseteq \calA$
  is an \emph{answer set} of a logic program $\Pi$ if (1) $M \in \mods{\Pi}$ and
  (2) there is no subset $M' \subset M$ such that $M' \in \mods{\Pi^M}$.
\end{definition}

The set of answer sets of a logic program $\Pi$ is denoted by
$\answersets{\Pi}$.  The \emph{consistency problem} of ASP, that is, to decide
whether for a given logic program $\Pi$ it holds that
$\answersets{\Pi}\neq\emptyset$, is \SIGMA{P}{2}-complete~\cite{amai:EiterG95},
and remains so also in the case where doubly negated atoms are allowed in rule
bodies~\cite{tplp:PearceTW09}.

\paragraph{Strong Equivalence for Logic Programs.} Two logic programs $\Pi_1 =
(\calA, \calR_1)$ and $\Pi_2 = (\calA, \calR_2)$ are \emph{equivalent} iff they
have the same set of answer sets, that is, $\answersets{\Pi_1} =
\answersets{\Pi_2}$. The two logic programs are \emph{strongly equivalent} iff
for any third logic program $\Pi = (\calA, \calR)$ it holds that the combined
logic program $\Pi_1 \cup \Pi = (\calA, \calR_1 \cup \calR)$ is equivalent to
the combined logic program $\Pi_2 \cup \Pi = (\calA, \calR_2 \cup \calR)$. Note
that strong equivalence implies equivalence, since the empty program
$\Pi=(\calA,\emptyset)$ would already contradict strong equivalence for two
non-equivalent programs $\Pi_1$ and $\Pi_2$.

An \emph{SE-model} \cite{tplp:Turner03} of a logic program $\Pi = (\calA,
\calR)$ is a two-tuple of interpretations $(X, Y)$, where $X \subseteq
Y\subseteq\calA$, $Y \models \Pi$, and $X \models \Pi^Y$. The set of SE-models
of a logic program $\Pi$ is denoted $\semods{\Pi}$. Note that for every model
$Y$ of $\Pi$, $(Y, Y)$ is an SE-model of $\Pi$, since $Y \models \Pi$ implies
that $Y \models \Pi^Y$.

Two logic programs (over the same atoms) are strongly equivalent iff they have
the same SE-models \cite{tocl:LifschitzPV01,tplp:Turner03}. Checking whether two
logic programs are strongly equivalent is known to be \co\NP-complete
\cite{tplp:Turner03,tplp:PearceTW09}.

\subsection{Epistemic Logic Programs (ELPs)}

An \emph{epistemic literal} is a formula $\eneg \ell$, where $\ell$ is a literal
and $\eneg$ is the epistemic negation operator. A \emph{ground epistemic logic
program (ELP)} is a tuple $\Pi = (\calA, \calR)$, where $\calA$ is a set of
propositional atoms and $\calR$ is a set of rules of the following form:
\begin{equation*}
   a_1\vee \cdots \vee a_k \leftarrow \ell_1, \ldots, \ell_m, \xi_1, \ldots,
   \xi_j, \neg \xi_{j + 1}, \ldots, \neg \xi_{n},
\end{equation*}
where $k \geqslant 0$, $m \geqslant 0$, $n \geqslant j \geqslant 0$, each $a_i
\in \calA$ is an atom, each $\ell_i$ is a literal, and each $\xi_i$ is an
epistemic literal, where the latter two each use an atom from $\calA$. Such
rules are also called \emph{ELP rules}.

Similar to logic programs, let $\head{r} = \{ a_1, \ldots, a_k \}$ denote the
head elements of an ELP rule, and let $\body{r} = \{ \ell_1, \ldots, \ell_m,
\xi_1, \ldots, \xi_j, \neg \xi_{j+1}, \ldots, \neg \xi_{n} \}$, that is, the set
of elements appearing in the rule body. The \emph{union} (or \emph{combination})
of two ELPs $\Pi_1 = (\calA_1, \calR_1)$ and $\Pi_2 = (\calA_2, \calR_2)$ is the
ELP $\Pi_1 \cup \Pi_2 = (\calA_1 \cup \calA_2, \calR_1 \cup \calR_2)$. The set
of epistemic literals occurring in an ELP $\Pi$ is denoted $\elitof{\Pi}$.

Shen and Eiter \cite{ai:ShenE16} define the semantics of ELPs as follows: let
$\Pi = (\calA, \calR)$ be an ELP. A subset $\Phi \subseteq \elitof{\Pi}$ is
called an \emph{epistemic guess} (or, simply, a \emph{guess}).  Given such a
guess, the \emph{epistemic reduct} of $\Pi$ w.r.t.\ $\Phi$, denoted $\Pi^\Phi$,
consists of the rules $\{ r^\neg \mid r \in \calR \}$, where $r^\neg$ is defined
as the rule $r \in \calR$ where all occurrences of epistemic literals $\eneg
\ell \in \Phi$ are replaced by $\top$, and all remaining epistemic negation
symbols $\eneg$ are replaced by default negation $\neg$. Note that, after this
transformation, $\Pi^\Phi$ is a logic program without epistemic
negation\footnote{Note that $\Pi^\Phi$ may contain triple-negated atoms
$\neg\neg\neg a$. But such formulas can always be replaced by $\neg a$
\cite{amai:LifschitzTT99}.}.

We are now ready to give the following, central definition:

\begin{definition}\label{def:secandidateworldview}
  Let $\Pi = (\calA, \calR)$ be an ELP. A set $\calM$ of interpretations over
  $\calA$ is a \emph{Shen-Eiter candidate world view (SE-CWV)} of $\Pi$ if there
  is an epistemic guess $\Phi \subseteq \elitof{\Pi}$ such that $\calM =
  \answersets{\Pi^\Phi}$ and the following conditions hold:
  \begin{enumerate}
    \item\label{def:secompatibility:1} $\calM \neq \emptyset$;
    \item\label{def:secompatibility:2} for each epistemic literal $\eneg \ell
      \in \Phi$, there exists an answer set $M \in \calM$ such that $M
      \not\models \ell$; and
    \item\label{def:secompatibility:3} for each epistemic literal $\eneg \ell
      \in \elitof{\Pi} \setminus \Phi$, for all answer sets $M \in \calM$ it
      holds that $M \models \ell$.
  \end{enumerate}
\end{definition}

The set of all SE-CWVs of an ELP $\Pi$ is denoted by $\secwvs{\Pi}$.  Following
the principle of knowledge minimization, \cite{ai:ShenE16} define a
\emph{Shen-Eiter world view} as a SE-CWV that has a maximal guess.

\begin{definition}\label{def:seworldview}
  Let $\Pi = (\calA, \calR)$ be an ELP. An SE-CWV $\calC$ is called a
  \emph{Shen-Eiter world view (SE-WV)} if its associated guess $\Phi$ is
  subset-maximal w.r.t.\ $\elitof{\Pi}$.
\end{definition}

The main reasoning task for ELPs is checking whether they are \emph{consistent},
that is, whether it holds that $\secwvs{\Pi} \neq \emptyset$. This problem is
also referred to as \emph{world view existence
problem}\footnote{SE-CWV-existence and SE-WV-existence are equivalent.}. This
problem is known to be \SIGMA{P}{3}-complete \cite{ai:ShenE16}.

\section{A Different World View}\label{sec:worldviews}

In this section, we will introduce our novel world view definition and explore
the relationship with the existing world view definition of Shen and Eiter
\cite{ai:ShenE16}, as defined in Section \ref{sec:preliminaries}. We also look
at the computational complexity of deciding whether a world view exists.

\subsection{The World View as a Model}

The idea underlying our definition is to treat world views not as collections of
answer sets, which are hard to interpret from an intuitive perspective, but,
instead, as three-valued models of ELPs. We first define the notion of a world
interpretation, which will be in a similar relationship to our new world view,
as an interpretation is to a model in classical logic.

\begin{definition}\label{def:cwi}
  Let $\Pi = (\calA, \calR)$ be an ELP, and let $\calL$ be the set of all
  literals built from atoms in $\calA$. Then, we call a consistent subset $I
  \subseteq \calL$ a \emph{candidate world interpretation (CWI)} for $\Pi$.
\end{definition}

This definition gives rise to a total, three-valued truth assignment to the
atoms $\calA$ of an ELP $\Pi = (\calA, \calR)$; hence, we will sometimes treat a
CWI $I$ as a triple of disjoint sets $(I^P, I^N, I^U)$, where $I^P = \{ a \mid a
\in I \}$, $I^N = \{ a \mid \neg a \in I \}$ and $I^U = (\calA \setminus I^P)
\setminus I^N$.

Note the immediate intuitive meaning of this three-valued assignment: an atom
$a$ in $I^P$ (i.e.\ $a \in I$) means that it is \emph{always true}, in $I^N$
(i.e.\ $\neg a \in I$) it means it is \emph{always false}, and in $I^U$ it is
\emph{unknown} in the sense that it is neither always false nor always true.
Note further that CWIs naturally represent consistent epistemic guesses in the
semantics of Shen and Eiter: assume an ELP $\Pi = (\calA, \calR)$ contains
exactly two epistemic literals, $\eneg a$ and $\eneg \neg a$. A-priori, from the
definitions in \cite{ai:ShenE16}, one may think that there are four possible
epistemic guesses: $\Phi_1 = \{ \eneg a, \eneg \neg a \}$, $\Phi_2 = \{ \eneg a
\}$, $\Phi_3 = \{ \eneg \neg a \}$, and $\Phi_4 = \emptyset$. But it turns out
that $\Phi_4$ will never give rise to a SE-CWV: we cannot both believe that
$\eneg a$ is false and that also $\eneg \neg a$ is false, because that would
mean that in the epistemic reduct $\Pi^{\Phi_4}$, $a$ has to be true in every
answer set of the epistemic reduct, but also that $\neg a$ has to be true in
every such answer set. This is only possible if $\Pi^{\Phi_4}$ has no answer
set, but then $\Phi_4$ can, by definition, not lead to a SE-CWV. Hence $\Phi_4$
is an inconsistent guess, which is naturally excluded by our definition of CWIs.

The other three epistemic guesses, $\Phi_1$, $\Phi_2$, and $\Phi_3$, however,
correspond to three distinct truth assignments to $a$. $\Phi_2$ represents that
we believe $\eneg a$ but not $\eneg \neg a$. The former says that there must be
an answer set of $\Pi^{\Phi_2}$ where $a$ is false, and the latter says that in
all answer sets of $\Pi^{\Phi_2}$, $\neg a$ must be true. Hence, $\Phi_2$
represents the epistemic guess that says that $a$ must be false everywhere.
Similarly, $\Phi_3$ says that $a$ must be true everywhere. Finally, $\Phi_1$
says that there must be an answer set of $\Pi^{\Phi_1}$ where $a$ is true, and
one where $a$ is false, and hence, that $a$ shall neither be true everywhere, or
false everywhere. Since the intuitive meaning of these three guesses can only be
discerned by referring back to the set of epistemic literals that appear in
$\Pi$. In fact, for ELPs where $\eneg \neg a$ does not appear, the meaning of
guess $\Phi_2$ would be different: instead of saying that $a$ must be false
everywhere, it would then say that $a$ must be false somewhere (but not
necessarily everywhere). It is thus our contention that epistemic guesses, as
defined in \cite{ai:ShenE16}, represent their intuitive meaning in a rather
convoluted and hard-to-understand way. Now note that the three guesses $\Phi_1$,
$\Phi_2$, and $\Phi_3$ naturally correspond to the intuitive meaning of our
CWIs: assuming that the universe of atoms $\calA$ contains only the atom $a$,
$\Phi_1$ corresponds to the CWI $\emptyset$, $\Phi_2$ to the CWI $\{ \neg a \}$
and $\Phi_3$ to the CWI $\{ a \}$. Thus, in the authors' opinion, CWIs capture
this intuitive meaning much better than the notion of the epistemic guess.

Before we can give our novel definition of a world view, we define the notion of
the epistemic reduct w.r.t.\ a CWI, in the same style as in \cite{ai:ShenE16},
as follows:

\begin{definition}\label{def:epistemicreduct}
  Let $\Pi = (\calA, \calR)$ be an ELP, and let $I$ be a CWI for $\Pi$. Then,
  let $\Pi^I$ denote the \emph{epistemic reduct} of $\Pi$ w.r.t.\ $I$, defined
  as follows: $\Pi^I = (\calA, \calR')$, where $\calR' = \{ r^I \mid r \in \calR
  \}$ and $r^I$ denotes the rule $r$ where each epistemic literal $\eneg \ell$
  is replaced by $\neg \ell$ if $\ell \in I$, and by $\top$ otherwise.
\end{definition}

We now define when a set of interpretations is compatible with a CWI.

\begin{definition}\label{def:compatibility}
  Let $\calI$ be a set of interpretations over a set of atoms $\calA$. Then, a
  CWI $I$ is \emph{compatible} with $\calI$ iff the following conditions hold:
  \begin{enumerate}
    \item\label{def:compatibility:1} $\calI \neq \emptyset$;
    \item\label{def:compatibility:2} for each atom $a \in I^P$, it holds that
      for each interpretation $J \in \calI$, $a \in J$;
    \item\label{def:compatibility:3} for each atom $a \in I^N$, it holds that
      for each interpretation $J \in \calI$, $a \not\in J$;
    \item\label{def:compatibility:4} for atom $a \in I^U$, it holds that there
      are two interpretations $J, J' \in \calI$, such that $a \in J$, but $a
      \not\in J'$.
  \end{enumerate}
\end{definition}

With these definitions in place, we can now give the main definition of this
work, our novel definition of a world view:

\begin{definition}\label{def:worldview}
  Let $\Pi = (\calA, \calR)$ be an ELP, and let $I$ be a CWI for $\Pi$. Then,
  $I$ is a \emph{candidate world view (CWV)} of $\Pi$ iff the set of answer sets
  $\answersets{\Pi^I}$ is compatible with $I$. A CWV $I$ is called a \emph{world
  view (WV)} if it is subset-minimal, that is, there is no other CWV $J \subset
  I$ of $\Pi$. The set of CWVs (resp.\ WVs) of an ELP $\Pi$ is denoted
  $\cwvs{\Pi}$ (resp.\ $\wvs{\Pi}$).
\end{definition}

To illustrate the usefulness of our newly proposed notion of WVs, let us return
to our introductory example from Section~\ref{sec:introduction}.

\begin{example}\label{ex:intro2}
  Consider the ELP $\Pi = (\calA, \calR)$, where the set of atoms $\calA$ is the
  set of all the atoms appearing in Example~\ref{ex:intro1}, that is, $\calA =
  \{ \relation{eligible}, \relation{ineligible}, \relation{highGPA},
  \relation{lowGPA}, \relation{fairGPA}, \relation{interview} \}$, and where
  $\calR$ contains the rules from Example~\ref{ex:intro1}. As stated in
  Section~\ref{sec:introduction}, this ELP has precisely one SE-CWV (and hence
  SE-WV). This SE-CWV corresponds to the epistemic guess $\Phi$, where $\Phi =
  \{ \eneg \relation{eligible}, \eneg \relation{ineligible} \}$. As stated in
  Example~\ref{ex:intro1}, both answer sets in this SE-CWV contain the fact
  $\relation{interview}$, and hence the candidate should be interviewed.

  Using our new notion of CWVs, we again see that there is exactly one CWV (and
  hence WV) $W$, where we have that $W = \{ \relation{interview}, \neg
  \relation{ineligible} \}$; or, equivalently, where $W^P = \{
  \relation{interview} \}$, $W^N = \{ \relation{ineligible} \}$, and $W^U =
  (\calA \setminus W^P) \setminus W^N$.  From our CWV $W$, we can immediately
  recognize the solution to the scholarship eligibility problem; namely, in this
  case, to interview the student (since the fact $\relation{interview}$ is true
  in $W$, and hence, intuitively, ``true in every possible world''). For
  SE-CWVs, the individual answer sets first need to be examined and compared to
  each other to draw that conclusion.
\end{example}

Emphasized by the above example, we would argue that our novel definition of a
world view is much more intuitive than the standard definition as a set of
answer sets. This is mainly because it has the distinct advantage of being
immediately interpretable: it is simply a truth assignment to all the atoms in
the ELP, with the additional information that some atoms are both true and false
somewhere (i.e.\ unknown). This information is suitable to be used directly by
end-users of ELP solvers.

With our main definitions in place, next we will investigate the relationship
between SE-CWVs and our novel CWVs as defined above.

\subsection{Relationship to SE-CWVs}

In this subsection, we will show that there is a close relationship between
SE-CWVs and CWVs. As is obvious from the definitions, the conditions that define
these notions are similar. The main difference is that our CWVs correspond to
the epistemic guesses in \cite{ai:ShenE16} instead of to the actual SE-CWVs,
those being sets of answer sets. However, for an ELP $\Pi = (\calA, \calR)$,
note that our CWVs are actually \emph{total} three-valued assignments to the
atoms in $\calA$; that is, each atom is either always true, always false, or
neither (i.e.\ unknown). Epistemic guesses, on the other hand, operate only on
those epistemic literals that actually appear in $\Pi$. Nevertheless, we can
show that epistemic guesses and our CWVs are closely related. Firstly, note that
we can always modify an ELP such that all possible epistemic literals actually
appear.

\begin{lemma}\label{lem:elitdomain}
  Let $\Pi = (\calA, \calR)$ be an ELP and let $\Pi' = (\calA, \calR')$ be an
  ELP over the same atom domain, but with $\calR' \supset \calR$ and $\calR'
  \setminus \calR = \{ r_1 \}$, where $r_1$ is the rule $\bot \gets a, \neg a,
  \eneg \ell$, where $\eneg \ell$ does not appear in $\calR$ and $\ell$ is a
  literal over the atom $a \in \calA$. Then, $\secwvs{\Pi} = \secwvs{\Pi'}$.
\end{lemma}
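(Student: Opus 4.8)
The plan is to exploit that the added rule $r_1 = \bot \gets a, \neg a, \eneg \ell$ is, after any epistemic reduct, a tautological rule whose body can never be satisfied. First I would observe that, since $\eneg \ell$ does not occur in $\calR$, we have $\elitof{\Pi'} = \elitof{\Pi} \cup \{ \eneg \ell \}$ with $\eneg \ell \notin \elitof{\Pi}$. For any guess $\Phi' \subseteq \elitof{\Pi'}$, write $\Phi = \Phi' \cap \elitof{\Pi}$. The epistemic reduct transforms every rule of $\calR$ identically in $\Pi^\Phi$ and $(\Pi')^{\Phi'}$, since the two guesses agree on all epistemic literals of $\Pi$, so $(\Pi')^{\Phi'} = \Pi^\Phi \cup \{ r_1^\neg \}$. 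Whether $\eneg \ell \in \Phi'$ (so $r_1^\neg = \bot \gets a, \neg a$) or $\eneg \ell \notin \Phi'$ (so $r_1^\neg = \bot \gets a, \neg a, \neg \ell$), the body of $r_1^\neg$ contains both $a$ and $\neg a$ and is therefore satisfied by no interpretation.

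The key technical step is then to show that adding such a rule does not change the answer sets, i.e.\ $\answersets{(\Pi')^{\Phi'}} = \answersets{\Pi^\Phi}$. I would argue this directly from the GL-reduct: for any candidate $M$, if $a \in M$ then $r_1^\neg$ is dropped from the reduct, because its negated body literal $\neg a$ is not satisfied; if $a \notin M$, then $r_1^\neg$ survives only with a positive body that still contains $a$, which is falsified by $M$ and by every $M' \subseteq M$. In either case $r_1^\neg$ leaves both the model check and the subset-minimality check of Definition~\ref{def:answerset} untouched, so the two programs have identical answer sets.

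With this in hand, I would prove the two inclusions separately. For $\secwvs{\Pi} \subseteq \secwvs{\Pi'}$, take $\calM \in \secwvs{\Pi}$ witnessed by a guess $\Phi \subseteq \elitof{\Pi}$, and extend it to a guess $\Phi'$ of $\Pi'$ by deciding the new literal $\eneg \ell$ via the natural dichotomy: since $\calM \neq \emptyset$, either some $M \in \calM$ has $M \nmodels \ell$, in which case I put $\eneg \ell \in \Phi'$ and condition~\ref{def:secompatibility:2} of Definition~\ref{def:secandidateworldview} holds for it, or all $M \in \calM$ satisfy $\ell$, in which case I leave $\eneg \ell \notin \Phi'$ and condition~\ref{def:secompatibility:3} holds. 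For all epistemic literals of $\Pi$ the three conditions carry over verbatim, as $\Phi$ and $\Phi'$ agree on them and the answer-set collection is the same by the previous step. For the converse $\secwvs{\Pi'} \subseteq \secwvs{\Pi}$, I take $\calM \in \secwvs{\Pi'}$ with guess $\Phi'$ and restrict to $\Phi = \Phi' \cap \elitof{\Pi}$; the three compatibility conditions for $\Phi$ then follow from those for $\Phi'$, using that $\Phi \subseteq \Phi'$ and $\elitof{\Pi} \setminus \Phi \subseteq \elitof{\Pi'} \setminus \Phi'$.

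I expect the main obstacle to be the careful bookkeeping in the GL-reduct argument of the second paragraph, specifically confirming that a rule with a self-contradictory body never interferes with the subset-minimality condition, rather than the guess correspondence, which becomes essentially definitional once the answer sets are shown to coincide.
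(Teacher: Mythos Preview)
Your proposal is correct and follows essentially the same route as the paper: both hinge on the observation that $r_1$ reduces to a rule with a self-contradictory body under any epistemic reduct, so the answer sets of $\Pi^\Phi$ and $(\Pi')^{\Phi'}$ coincide, after which the forward inclusion is obtained by extending the guess via the natural dichotomy on $\ell$ and the backward inclusion by restricting the guess to $\elitof{\Pi}$. The only cosmetic differences are that you spell out the GL-reduct argument in more detail than the paper (which simply declares $r_1$ ``tautological'') and that you prove the second inclusion directly rather than via the paper's contrapositive.
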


\begin{proof}
  Note that the literal $\ell$ must be over some atom from $\calA$. Note further
  that the rule $r_1$ is clearly tautological, that is, it will always be
  satisfied. Now, consider any epistemic guess $\Phi \subseteq \elitof{\Pi}$,
  and let $\Phi' = \Phi \cup \{ \eneg \ell \}$. Observe that $\Pi'^{\Phi'} =
  \Pi'^\Phi$, since, in the epistemic reduct, rule $r_1$ reduces to $\bot \gets
  a, \neg a$ for both guesses $\Phi$ and $\Phi'$. Hence, let $\calM =
  \answersets{\Pi'^\Phi}$. Since $r_1$ is tautological, it has no influence on
  the answer sets, and we conclude that $\calM = \answersets{\Pi^\Phi}$.

  Now, assume that $\calM$ is an SE-CWV for $\Pi$ w.r.t.\ the epistemic guess
  $\Phi$. Then, it is not difficult to see that it also satisfies the conditions
  of Definition~\ref{def:secandidateworldview} for $\Pi'$ w.r.t.\ either the
  epistemic guess $\Phi$ or $\Phi'$ (since either $\ell$ is true in all answer
  sets in $\calM$, or there is one answer set where $\ell$ is false). Hence,
  $\calM$ is also an SE-CWV for $\Pi'$, either w.r.t.\ guess $\Phi$ or $\Phi'$.

  Conversely, assume that $\calM$ is not an SE-CWV for $\Pi$ w.r.t.\ guess
  $\Phi$. Then, either $\calM = \emptyset$, or there is an epistemic literal
  $\eneg \ell' \in \Phi$ that violates Condition~\ref{def:secompatibility:2}
  or~\ref{def:secompatibility:3} of Definition~\ref{def:secandidateworldview}.
  But then, this violation also holds for $\calM$ w.r.t.\ the ELP $\Pi'$ and
  both of the guesses $\Phi$ and $\Phi'$. Hence, $\calM$ is also not an SE-CWV
  for $\Pi'$.
\end{proof}

We are now ready to state our main correspondence theorem between our notion of
world views and the notion by Shen and Eiter \cite{ai:ShenE16}. In fact, we show
that there is a one-to-one correspondence between SE-CWVs and our CWVs.

\begin{theorem}\label{thm:cvwcorrespondence}
  Let $\Pi = (\calA, \calR)$ be an ELP. Then, it holds that for each CWV of
  $\Pi$ there is exactly one SE-CWV of $\Pi$, and for each SE-CWV of $\Pi$ there
  is exactly one CWV of $\Pi$.
\end{theorem}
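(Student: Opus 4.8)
The plan is to exhibit two explicit maps between the candidate world views and the Shen-Eiter candidate world views of $\Pi$ and prove that they are mutually inverse, which yields the claimed one-to-one correspondence. The map $F$ sends a CWV $I$ to the set of interpretations $\answersets{\Pi^I}$, and the map $G$ sends an SE-CWV $\calM$ to the CWI obtained by reading off the three-valued assignment determined by $\calM$, namely $I^P = \{ a \mid \forall M \in \calM,\ a \in M \}$, $I^N = \{ a \mid \forall M \in \calM,\ a \notin M \}$, and $I^U$ the remaining atoms.

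The technical heart on which everything rests is a reduct-correspondence lemma: for a CWI $I$ and the guess $\Phi = \{ \eneg \ell \in \elitof{\Pi} \mid \ell \notin I \}$, the epistemic reducts coincide, that is, $\Pi^I = \Pi^\Phi$. I would prove this by comparing Definition~\ref{def:epistemicreduct} with the Shen-Eiter reduct literal by literal: an epistemic literal $\eneg \ell \in \elitof{\Pi}$ is rewritten to $\top$ exactly when $\ell \notin I$ (equivalently, when $\eneg \ell \in \Phi$) and to $\neg \ell$ otherwise, so the two constructions produce the same logic program, and hence $\answersets{\Pi^I} = \answersets{\Pi^\Phi}$.

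For the first direction I would take a CWV $I$, set $\Phi$ as above, and verify the three conditions of Definition~\ref{def:secandidateworldview} for $\calM = \answersets{\Pi^I}$ and guess $\Phi$; by Definition~\ref{def:worldview}, $\calM$ is compatible with $I$, which supplies the needed facts. Nonemptiness is Condition~\ref{def:compatibility:1} of compatibility. For $\eneg \ell \in \Phi$ we have $\ell \notin I$; writing $\ell$ over an atom $a$, this places $a$ in $I^N \cup I^U$ when $\ell = a$ and in $I^P \cup I^U$ when $\ell = \neg a$, and in every sub-case Conditions~\ref{def:compatibility:2}--\ref{def:compatibility:4} furnish some $M \in \calM$ with $M \not\models \ell$, giving Condition~\ref{def:secompatibility:2}. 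Symmetrically, $\eneg \ell \in \elitof{\Pi} \setminus \Phi$ means $\ell \in I$, and Conditions~\ref{def:compatibility:2}~and~\ref{def:compatibility:3} force $\ell$ true in every answer set, giving Condition~\ref{def:secompatibility:3}; hence $F(I)$ is an SE-CWV. For the second direction I would start from an SE-CWV $\calM$ with guess $\Phi$ and set $I = G(\calM)$. Conditions~\ref{def:secompatibility:2}~and~\ref{def:secompatibility:3} of Definition~\ref{def:secandidateworldview} say precisely that $\eneg \ell \in \Phi \iff \ell \notin I$ for each $\eneg \ell \in \elitof{\Pi}$ (using $\calM \neq \emptyset$), so $\Phi$ is forced to be exactly the guess from the reduct lemma; thus $\answersets{\Pi^I} = \calM$, and compatibility of $\calM$ with $I$ is immediate from the read-off definition of $I$, so $I$ is a CWV.

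Finally, I would check that $F$ and $G$ are mutually inverse: $F(G(\calM)) = \answersets{\Pi^I} = \calM$ by the above, and $G(F(I)) = I$ because the compatibility conditions, which hold since $I$ is a CWV, pin down the partition $(I^P, I^N, I^U)$ to be exactly the read-off from $\answersets{\Pi^I}$. I expect the main obstacle to be conceptual rather than computational: correctly setting up the translation dictionary $\eneg \ell \in \Phi \iff \ell \notin I$, since the roles of $\top$ and $\neg \ell$ appear swapped between the two reduct definitions, and ensuring that the mismatch between CWIs (which are total over all of $\calA$) and guesses (which range only over $\elitof{\Pi}$) does not spoil injectivity. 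The latter is resolved by observing that a CWV's three-valued assignment is entirely determined by its own answer sets through the compatibility conditions, so $G \circ F$ is the identity and $F$ is a bijection.
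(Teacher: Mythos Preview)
Your proposal is correct and rests on the same reduct-correspondence idea as the paper, but the route differs in two respects worth noting.

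First, the paper begins by invoking Lemma~\ref{lem:elitdomain} repeatedly so that, without loss of generality, every epistemic literal over $\calA$ occurs in $\Pi$; only then does it build the guess $\Phi$ atom-by-atom from the three-valued assignment $(W^P,W^N,W^U)$. You bypass this preprocessing entirely by taking $\Phi = \{\eneg\ell \in \elitof{\Pi} \mid \ell \notin I\}$, which only ever mentions epistemic literals that actually appear in $\Pi$; your reduct lemma $\Pi^I = \Pi^\Phi$ then goes through literal-by-literal without any domain extension. Second, in the SE-CWV $\to$ CWV direction the paper reconstructs the CWI from the guess $\Phi$ (which is why it needs all epistemic literals to be present), whereas you read the CWI off from the set $\calM$ itself and then \emph{derive} that the guess must coincide with your $\Phi$. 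This makes the map $G$ manifestly well defined and lets you verify $F\circ G=\mathrm{id}$ and $G\circ F=\mathrm{id}$ explicitly, whereas the paper's sketch only exhibits the two directions and leaves the ``exactly one'' part implicit.

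In short, your argument is a bit more self-contained (no auxiliary lemma needed) and makes the bijection explicit; the paper's version, by first saturating the epistemic literals, has the expository benefit of rendering the dictionary between consistent guesses and three-valued assignments transparent, which ties in with the discussion preceding the theorem.
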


\begin{proof}[Proof (Sketch)]
  Firstly, note that by repeated applications of Lemma~\ref{lem:elitdomain}, we
  can always assume that the rules of $\Pi$ contain all the epistemic literals
  that can be build from $\calA$.
  
  To prove the first direction, assume that $W$ is a CWV of $\Pi$. Let $\Phi$ be
  the following epistemic guess for $\Pi$: for each $a \in W^P$, let $\Phi \cap
  \{ \eneg a, \eneg \neg a \} = \{ \eneg \neg a \}$; for each $a \in W^N$, let
  $\Phi \cap \{ \eneg a, \eneg \neg a \} = \{ \eneg a \}$; and for each $a \in
  W^U$, let $\Phi \cap \{ \eneg a, \eneg \neg a \} = \{ \eneg a, \eneg \neg a
  \}$. It can be verified that the epistemic reduct $\Pi^W$ is equal to the
  Shen-Eiter epistemic reduct $\Pi^\Phi$. Hence, let $\calM = \answersets{\Pi^W}
  = \answersets{\Pi^\Phi}$. Since $W$ is, by assumption, a CWV of $\Pi$, $\calM$
  fulfills the conditions of Definition~\ref{def:compatibility} w.r.t.\ $W$.
  But then it is not difficult to see that, by construction of $\Phi$, $\calM$
  also satisfies the conditions of Definition~\ref{def:secandidateworldview},
  and hence is an SE-CWV of $\Pi$ w.r.t.\ guess $\Phi$.

  For the other direction, assume that $\calM$ is an SE-CWV for $\Pi$ w.r.t.\
  some epistemic guess $\Phi$. Now, construct CWI $W$ as follows (recall that
  $\Phi$ is a subset of all possible epistemic literals over $\calA$): for each
  $a \in \calA$, if $\Phi \cap \{ \eneg a, \eneg \neg a \} = \{ \eneg \neg a \}$
  then let $a \in W^P$; if $\Phi \cap \{ \eneg a, \eneg \neg a \} = \{ \eneg a
  \}$ then let $a \in W^N$; and if $\Phi \cap \{ \eneg a, \eneg \neg a \} = \{
  \eneg a, \eneg \neg a \}$ then let $a \in W^U$. By a similar argument to the
  one above, we can show that $W$ is indeed a CWV of $\Pi$, since, by
  construction of $W$, the epistemic reducts $\Pi^\Phi$ and $\Pi^W$ coincide.
\end{proof}

The proof for the theorem above gives a reduction between SE-CWV-existence and
CWV-existence and vice-versa. From this reduction, and the fact that
SE-CWV-existence is \SIGMA{P}{3}-complete in general \cite{ai:ShenE16}, the
statement below follows immediately:

\begin{theorem}
  Checking whether an ELP has at least one CWV (or, equivalently, at least one WV) is \SIGMA{P}{3}-complete.
\end{theorem}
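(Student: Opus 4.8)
The plan is to derive the result directly from Theorem~\ref{thm:cvwcorrespondence} together with the known \SIGMA{P}{3}-completeness of SE-CWV-existence. The key observation is that the one-to-one correspondence established in Theorem~\ref{thm:cvwcorrespondence} is between the SE-CWVs and the CWVs of the \emph{same} ELP $\Pi$, so the cardinalities of $\secwvs{\Pi}$ and $\cwvs{\Pi}$ coincide; in particular one set is empty exactly when the other is, yielding the equivalence ``$\Pi$ has at least one CWV if and only if $\Pi$ has at least one SE-CWV''. This single bi-implication is what drives both the upper and the lower bound, provided the underlying transformations are polynomial.

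First I would establish membership in \SIGMA{P}{3}. Given an instance $\Pi$, one applies the normalisation used in the proof of Theorem~\ref{thm:cvwcorrespondence}: for each atom $a \in \calA$ and each literal over $a$, add a tautological rule of the form $\bot \gets a, \neg a, \eneg \ell$ whenever the corresponding epistemic literal is missing from $\calR$. This adds at most $2\,\card{\calA}$ rules and is computable in polynomial time. By Lemma~\ref{lem:elitdomain} this step preserves the SE-CWVs, and since the added rules reduce to tautologies in every epistemic reduct it also preserves the CWVs. On the normalised ELP the correspondence applies, so a CWV exists iff an SE-CWV exists; composing the polynomial normalisation with a \SIGMA{P}{3} decision procedure for SE-CWV-existence~\cite{ai:ShenE16} therefore places CWV-existence in \SIGMA{P}{3}.

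For hardness, the same equivalence is read in the other direction: the normalisation (followed by the identity on the instance) is a polynomial-time many-one reduction from SE-CWV-existence to CWV-existence, and since SE-CWV-existence is \SIGMA{P}{3}-hard~\cite{ai:ShenE16}, so is CWV-existence. Combined with the membership result, this gives \SIGMA{P}{3}-completeness of CWV-existence. To transfer the bound to WV-existence I would note that a CWV exists iff a WV exists: every WV is a CWV, and conversely $\cwvs{\Pi}$ is finite (each CWV is a consistent subset of the finite literal set $\calL$), so whenever it is non-empty it contains a subset-minimal element, which is by definition a WV. Hence the two existence problems coincide and share the same complexity.

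The step I expect to require the most care is verifying that the normalisation simultaneously preserves \emph{both} notions, so that the equivalence ``$\Pi$ has a CWV iff $\Pi$ has an SE-CWV'' is genuinely available on the normalised instance. The SE-CWV side is handled by Lemma~\ref{lem:elitdomain}, but the CWV side requires separately checking that each added rule collapses to a tautology in the reduct $\Pi^I$ for \emph{every} CWI $I$, independently of whether the relevant literal lies in $I$ (in one case the rule becomes $\bot \gets a, \neg a, \neg \ell$ and in the other $\bot \gets a, \neg a$, both of which can never fire). Once this is confirmed, the complexity result follows immediately from the correspondence theorem with no further work.
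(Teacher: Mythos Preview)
Your proposal is correct and follows essentially the same route as the paper: both derive the result directly from the one-to-one correspondence of Theorem~\ref{thm:cvwcorrespondence} together with the known \SIGMA{P}{3}-completeness of SE-CWV-existence from~\cite{ai:ShenE16}, and both handle the WV case by the trivial observation that a CWV exists iff a subset-minimal one does. Your treatment is more explicit than the paper's one-line justification---you spell out the polynomial-time normalisation via Lemma~\ref{lem:elitdomain} and separately verify that the added tautological rules do not affect CWVs---but note that this extra care is not strictly necessary once Theorem~\ref{thm:cvwcorrespondence} is accepted as stated for \emph{arbitrary} $\Pi$: the identity map on instances already serves as the required reduction in both directions.
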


The statement above also holds for WVs, because if a CWV exists then, clearly,
also a subset-minimal CWV (that is, a WV) exists. Furthermore, any WV is also a
CWV. This concludes our investigation of the novel world view definition
proposed in this section.

\section{Equivalence of ELPs}\label{sec:equivalence}

In this section, we will look at different equivalence notions for ELPs. But
before we start, let us first explore several properties of ELPs that we will
make use of in this section.

We start by re-stating a folklore result from the world of ASP, namely that the
universe of atoms of an answer set program can be extended without changing its
answer sets.

\begin{proposition}\label{prop:atomdomainlp}
  Let $\Pi = (\calA, \calR)$ be a logic program and let $\Pi' = (\calA', \calR)$
  be a logic program with the same set of rules, but with $\calA' \supset
  \calA$. Then, $\answersets{\Pi} = \answersets{\Pi'}$.
\end{proposition}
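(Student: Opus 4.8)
The plan is to establish the two inclusions $\answersets{\Pi} \subseteq \answersets{\Pi'}$ and $\answersets{\Pi'} \subseteq \answersets{\Pi}$ separately, resting everything on one elementary observation: since $\Pi = (\calA, \calR)$ is a well-formed logic program, every rule in $\calR$ mentions only atoms from $\calA$, so no atom of $\calA' \setminus \calA$ occurs in any rule. I would first record the consequences of this: for any rule $r \in \calR$ and any interpretation $J$, satisfaction $J \models r$ depends only on $J \cap \calA$; hence $J \in \mods{\Pi}$ iff $J \cap \calA \in \mods{\Pi}$, and likewise for $\Pi'$. Crucially, the GL-reduct is insensitive to the extra atoms as well: for any $M$, the rule sets of $\Pi^M$ and $\Pi'^M$ are literally identical (the reduct only drops rules and strips negated literals according to $M$'s evaluation of atoms in $\calA$), so $\mods{\Pi^M}$ and $\mods{\Pi'^M}$ agree on all subsets of $\calA$.

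For $\answersets{\Pi} \subseteq \answersets{\Pi'}$, I would take $M \in \answersets{\Pi}$, note $M \subseteq \calA \subseteq \calA'$, and check the two conditions of Definition~\ref{def:answerset} for $\Pi'$. Condition~(1), $M \in \mods{\Pi'}$, is immediate from $M \in \mods{\Pi}$ and the observation above. For condition~(2) I would argue by contraposition: a witness $M' \subsetneq M$ with $M' \in \mods{\Pi'^M}$ would satisfy $M' \subseteq \calA$ and therefore also $M' \in \mods{\Pi^M}$, contradicting minimality of $M$ in $\Pi$.

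The reverse inclusion is the more interesting direction, and its key step is to show that an answer set $M$ of $\Pi'$ cannot contain any fresh atom, i.e.\ $M \subseteq \calA$. Here I would use minimality as a lever. Since $M \in \mods{\Pi'}$, the remark from the preliminaries gives $M \models \Pi'^M$; because the reduct rules mention no atom outside $\calA$, the restriction $M \cap \calA$ is still a model of $\Pi'^M$. If $M$ contained an atom of $\calA' \setminus \calA$, then $M \cap \calA \subsetneq M$ would be a strict submodel of $\Pi'^M$, contradicting condition~(2) of Definition~\ref{def:answerset}; hence $M = M \cap \calA \subseteq \calA$. Once this is in hand, $M \in \mods{\Pi}$ and the minimality of $M$ with respect to $\Pi^M$ transfer back from $\Pi'$ exactly as before, giving $M \in \answersets{\Pi}$.

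I expect the only genuine subtlety to be this exclusion of fresh atoms from answer sets of $\Pi'$: it is the foundedness/minimality requirement, rather than mere modelhood, that does the work, since an atom appearing in no rule can never be supported and can therefore always be dropped while preserving modelhood of the reduct. All remaining steps are routine once one notes that both rule satisfaction and the construction of the GL-reduct ignore atoms that occur in no rule.
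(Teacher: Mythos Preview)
Your proposal is correct and rests on exactly the same key observation the paper highlights: atoms in $\calA' \setminus \calA$ cannot occur in $\calR$, hence they cannot be supported and must be absent from every answer set of $\Pi'$. The paper treats the result as folklore and gives only that one-line remark in lieu of a proof, whereas you spell out both inclusions in full; your argument is a faithful and rigorous expansion of the paper's sketch, not a different route.
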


It is easy to see that the above proposition holds by noting that any atom $a
\in \calA' \setminus \calA$ can clearly not appear in the rules $\calR$, and
hence any such $a$ must be false in any answer set of $\Pi'$ (i.e.\ for all $M
\in \answersets{\Pi'}$ it holds that $a \not\in M$). From this result, it is
easy to obtain a similar result for ELPs:

\begin{proposition}\label{prop:atomdomainelp}
  Let $\Pi = (\calA, \calR)$ be an ELP and let $\Pi' = (\calA', \calR)$ be an
  ELP with the same set of rules, but with $\calA' \supset \calA$. Then,
  $\cwvs{\Pi'} = \{ W \cup \{ \neg a \} \mid W \in \cwvs{\Pi} \}$.
\end{proposition}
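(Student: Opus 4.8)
The plan is to exhibit the claimed equality via a bijection that is forced by the compatibility conditions of Definition~\ref{def:compatibility}. The starting observation is that, since $\Pi$ and $\Pi'$ share the same rule set $\calR$, they share the same epistemic literals, and every such literal is built over an atom of $\calA$. Consequently, for any CWI $I'$ of $\Pi'$, the epistemic reduct $\Pi'^{I'}$ depends only on which literals over $\calA$ lie in $I'$: writing $W = I' \cap \calL$ for the restriction of $I'$ to literals over $\calA$, the rules of $\Pi'^{I'}$ and of $\Pi^W$ coincide. Applying Proposition~\ref{prop:atomdomainlp} to these two reduct logic programs $\Pi^W$ (over $\calA$) and $\Pi'^{I'}$ (over $\calA'$), which have identical rules, yields $\answersets{\Pi'^{I'}} = \answersets{\Pi^W}$; in particular every answer set in this common set is a subset of $\calA$ and contains no atom of $\calA' \setminus \calA$.

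The crux is then to determine, via Definition~\ref{def:compatibility}, how a CWV of $\Pi'$ must treat the fresh atoms. Fix a new atom $a \in \calA' \setminus \calA$ and let $\calM := \answersets{\Pi'^{I'}}$. Since $a$ belongs to no interpretation in $\calM$, condition~\ref{def:compatibility:2} forbids $a \in I'^P$ (it would require $a$ in every interpretation), and condition~\ref{def:compatibility:4} forbids $a \in I'^U$ (it would require some interpretation containing $a$). The only possibility left is $a \in I'^N$, i.e.\ $\neg a \in I'$, which is compatible because condition~\ref{def:compatibility:3} then merely demands that $a$ be absent from every interpretation in $\calM$, and this holds. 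Hence every CWV $I'$ of $\Pi'$ necessarily has the form $W \cup \{ \neg a \mid a \in \calA' \setminus \calA \}$.

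It remains to verify the two inclusions. For the forward direction, given a CWV $I'$ of $\Pi'$, the restriction $W = I' \cap \calL$ is a CWI for $\Pi$; since $\answersets{\Pi^W} = \answersets{\Pi'^{I'}}$ and the compatibility conditions for atoms of $\calA$ read identically in both settings, $W$ is a CWV of $\Pi$, and by the previous paragraph $I' = W \cup \{ \neg a \mid a \in \calA' \setminus \calA \}$. For the converse, given a CWV $W$ of $\Pi$, set $I' = W \cup \{ \neg a \mid a \in \calA' \setminus \calA \}$; this is a consistent literal set over $\calA'$, hence a CWI of $\Pi'$, its reduct again produces $\answersets{\Pi^W}$, and compatibility holds because conditions~\ref{def:compatibility:1}--\ref{def:compatibility:4} for atoms of $\calA$ are inherited from $W$ being a CWV, while for each new atom only condition~\ref{def:compatibility:3} is relevant and holds as noted. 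This establishes the stated equality.

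I expect the only delicate point to be the bookkeeping around Proposition~\ref{prop:atomdomainlp}: one must take the epistemic reduct \emph{before} comparing atom universes, so that $\Pi'^{I'}$ and $\Pi^W$ genuinely are the same rules over different atom domains and the proposition applies verbatim; the rest is a routine matching of the four compatibility conditions.
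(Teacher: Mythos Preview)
Your proof is correct and follows essentially the same approach as the paper: apply Proposition~\ref{prop:atomdomainlp} to the reducts to see that fresh atoms lie in no answer set, then read off from the compatibility conditions that such atoms must be assigned false. Your argument is in fact more complete than the paper's, which only explicitly establishes the inclusion $\{ W \cup \{\neg a\} \mid W \in \cwvs{\Pi} \} \subseteq \cwvs{\Pi'}$; you prove both directions and handle the distinction between CWIs over $\calA$ and over $\calA'$ more carefully.
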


\begin{proof}
  Note that, clearly, any atom $a \in \calA' \setminus \calA$ cannot appear
  anywhere in $\calR$. Hence, for every CWI $I$ over $\calA$, $a$ also does not
  appear in the rules of the epistemic reduct $\Pi'^I$ and thus also not in any
  answer set of $\Pi'^I$. But then, $\answersets{\Pi'^I} = \answersets{\Pi^I}$
  (via Proposition~\ref{prop:atomdomainlp}). Then, $a$ is false in every one of
  these answer sets. It is therefore trivial that if $W$ is a CWV of $\Pi$, then
  $W \cup \{ \neg a \}$ is a CWV of $\Pi'$.
\end{proof}

From the above, we can see that the atom domain of an ELP can be arbitrarily
extended, and atoms that are in the atom domain of an ELP but do not appear in
its rules will always be false in every CWV of the ELP. With these results we
are now ready to investigate notions of equivalence for ELPs under the semantics
proposed in Section~\ref{sec:worldviews}.

In a recent paper that investigates strong equivalence for ELPs
\cite{aaai:FaberMW19}, the authors define equivalence of ELPs under the SE-CWV
semantics as follows: two ELPs are \emph{SE-CWV-equivalent} iff their SE-CWVs
are the same and \emph{SE-WV-equivalent} iff their SE-WVs are the same. We can
define a similar notion of equivalence using our definition of CWVs as follows:

\begin{definition}\label{def:equivalence}
  Two ELPs $\Pi_1$ and $\Pi_2$ are \emph{CWV-equivalent} (resp.\
  \emph{WV-equivalent}), denoted $\Pi_1 \equiv_{CWV} \Pi_2$ (resp.\ $\Pi_1
  \equiv_{WV} \Pi_2$) if and only if $\cwvs{\Pi_1} = \cwvs{\Pi_2}$ (resp.\
  $\wvs{\Pi_1} = \wvs{\Pi_2})$.
\end{definition}

Recall that our new notion of CWVs abstracts from the answer sets, since the
individual answer sets themselves no longer appear directly within them. Hence,
our notion of equivalence turns out to be strictly more general than the one in
\cite{aaai:FaberMW19}: we can find an ELP that is equivalent under our
semantics, but not equivalent under their notion of equivalence, which is based
on SE-CWVs, as the following theorem states.

\begin{theorem}\label{thm:equivalence}
  CWV-equivalence (resp.\ WV-equivalence) strictly generalizes
  SE-CWV-equivalence (resp.\ SE-WV-equivalence).
\end{theorem}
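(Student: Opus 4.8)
The plan is to read ``strictly generalizes'' as two separate obligations, each proved for the CWV and the WV variant. First I would show the \emph{generalization} direction: for two ELPs over a common atom set $\calA$, SE-CWV-equivalence implies CWV-equivalence, and SE-WV-equivalence implies WV-equivalence. Then I would exhibit a concrete pair of ELPs witnessing \emph{strictness}, i.e.\ that are CWV- and WV-equivalent but differ in their SE-CWVs and SE-WVs.

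For the generalization direction, the key observation is that a CWV can be recovered from its associated collection of answer sets by a map depending only on $\calA$, not on the rules. Given any nonempty set $\calM$ of interpretations over $\calA$, Definition~\ref{def:compatibility} pins down a \emph{unique} compatible CWI $\kappa(\calM)$, namely $\kappa(\calM)^P = \{ a \mid a \in J \text{ for all } J \in \calM \}$, $\kappa(\calM)^N = \{ a \mid a \notin J \text{ for all } J \in \calM \}$, and all remaining atoms unknown; these three sets partition $\calA$ and satisfy conditions~\ref{def:compatibility:2}--\ref{def:compatibility:4} by construction. Combining this with Theorem~\ref{thm:cvwcorrespondence}, I would argue that $\cwvs{\Pi} = \{ \kappa(\calM) \mid \calM \in \secwvs{\Pi} \}$: every CWV $W$ satisfies $W = \kappa(\answersets{\Pi^W})$ with $\answersets{\Pi^W} \in \secwvs{\Pi}$, and conversely each SE-CWV corresponds under the theorem to the CWV $\kappa(\calM)$. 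Since $\kappa$ is fixed once $\calA$ is fixed, $\secwvs{\Pi_1} = \secwvs{\Pi_2}$ immediately yields $\cwvs{\Pi_1} = \cwvs{\Pi_2}$. (The equivalence notions compare sets over a common $\calA$; Proposition~\ref{prop:atomdomainelp} lets me align atom domains if needed.) For the world-view version I would refine the correspondence to the minimization step: writing $\Phi_W$ for the canonical guess assigned to a CWV $W$ in the proof of Theorem~\ref{thm:cvwcorrespondence} (one of $\eneg a$, $\eneg \neg a$ for committed atoms, both for unknown ones), a short check shows the bijection is \emph{order-reversing}, i.e.\ $\Phi_{W_1} \subseteq \Phi_{W_2}$ iff $W_2 \subseteq W_1$ as sets of literals. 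Hence the subset-maximal guesses (SE-WVs) correspond exactly to the subset-minimal CWIs (WVs), so SE-WV-equivalence implies WV-equivalence by the same program-independence argument.

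It remains to show strictness, for which a single example suffices. Take $\calA = \{ a, b \}$ and the epistemic-literal-free programs $\Pi_1 = (\calA, \{ a \vee b \gets \})$ and $\Pi_2 = (\calA, \{ a \gets \neg\neg a,\ b \gets \neg\neg b \})$, the latter being the standard double-negation choice construction available in the ASP fragment. Since neither contains epistemic literals, each has exactly one SE-CWV, equal to its set of answer sets: $\answersets{\Pi_1} = \{ \{a\}, \{b\} \}$ and $\answersets{\Pi_2} = \{ \emptyset, \{a\}, \{b\}, \{a, b\} \}$. These differ, so the programs are \emph{not} SE-CWV-equivalent (nor SE-WV-equivalent). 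Yet in both cases each of $a, b$ is true in some but not all answer sets, so by the recovery map the unique CWV is the empty CWI $\emptyset$ (every atom unknown); thus $\cwvs{\Pi_1} = \{ \emptyset \} = \cwvs{\Pi_2}$, and these being the unique WVs as well, $\wvs{\Pi_1} = \{ \emptyset \} = \wvs{\Pi_2}$.

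The main obstacle I anticipate is the generalization direction rather than the example. The delicate points are making precise that $\kappa$ is well-defined (uniqueness of the compatible CWI) and genuinely program-independent, and then carefully verifying the order-reversal $\Phi_{W_1} \subseteq \Phi_{W_2} \iff W_2 \subseteq W_1$ so that the WV/SE-WV correspondence respects the two minimization criteria. By contrast, once the all/none/some membership pattern is matched across the two answer-set collections, the strictness example is routine.
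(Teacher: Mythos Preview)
Your proposal is correct, and considerably more detailed than the paper's own proof. The paper dispatches the generalization direction in one sentence (``when the SE-(C)WVs are the same, then so are the (C)WVs''), essentially taking for granted what you make explicit: that a compatible CWI is uniquely determined by the underlying collection of interpretations via the map you call $\kappa$, and that this map is program-independent, so equal SE-CWV sets force equal CWV sets. Your treatment of the WV case via the order-reversal $\Phi_{W_1} \subseteq \Phi_{W_2} \Leftrightarrow W_2 \subseteq W_1$ is likewise something the paper does not spell out. One point to watch when you write this up: the order-reversal is established for the canonical guess over \emph{all} epistemic literals (after applying Lemma~\ref{lem:elitdomain}), whereas Definition~\ref{def:seworldview} takes maximality relative to $\elitof{\Pi}$; you should say a word about why passing to the saturated program preserves the set of SE-WVs, not just SE-CWVs. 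The paper's terse argument glosses over this as well, so you are not missing anything the paper supplies, but it is the one place your sketch still leans on an unstated step.

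For strictness the two proofs diverge in the specific witness. The paper describes, without giving explicit rules, two ELPs over $\{a,b,c\}$ whose unique SE-CWVs are $\{\{a\},\{b\},\{c\}\}$ and $\{\{a\},\{b,c\}\}$ respectively, both collapsing to the all-unknown CWV. Your example over $\{a,b\}$ using $a \vee b$ versus the double-negation choice construct is smaller, fully explicit, and works for exactly the same reason: both programs have a single SE-CWV in which each atom is true in some but not all answer sets. Either witness suffices; yours has the advantage of being immediately checkable.
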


\begin{proof}
  Clearly, SE-(C)WV-equivalence implies (C)WV-equivalence, since when the
  SE-(C)WVs are the same, then so are the (C)WVs. To see the other direction, we
  can construct two ELPs $\Pi_1$ and $\Pi_2$, such that they both have exactly
  one CWV (and hence WV, SE-CWV, and SE-WV), and that the SE-CWV of $\Pi_1$ is
  $\{ \{ a \}, \{ b \}, \{ c \} \}$. Hence, the corresponding CWV $W$ assigns
  all three atoms $a$, $b$, and $c$ to unknown (i.e.\ $W^U = \{ a, b, c \}$).
  Now, we can construct an ELP $\Pi_2$ in such a way that it has the CWV $W$
  (that is, all three atoms are still unknown), but the corresponding SE-CWV of
  $\Pi_2$ is $\{ \{ a \}, \{ b, c \} \}$. This shows that $\Pi_1$ and $\Pi_2$
  are (C)WV-equivalent, but not SE-(C)WV-equivalent, as desired.
\end{proof}

With the definition for equivalence in place, we can now also investigate the
notion of strong equivalence. Strong equivalence is a well-studied problem in
plain ASP, with useful applications for program simplification
\cite{tocl:LifschitzPV01,iclp:CabalarPV07,jair:LinC07,jancl:EiterFPTW13}. For
ELPs, several works deal with strong equivalence by abstracting into epistemic
extensions of Heyting's logic of here-and-there (HT)
\cite{lpnmr:WangZ05,ijcai:CerroHS15}. A recent paper deals with a direct
characterization of strong equivalence for the SE-CWV semantics
\cite{aaai:FaberMW19}. Similarly to the definition of ordinary equivalence, we
will see how we can generalize their characterization to apply to our semantics.
We begin by defining strong equivalence in our context:

\begin{definition}\label{def:strongequivalence}
  Two ELPs $\Pi_1$ and $\Pi_2$ are \emph{strongly CWV-equivalent} (resp.\
  \emph{strongly WV-equivalent}) if and only if for any third ELP $\Pi$, it
  holds that $\Pi_1 \cup \Pi \equiv_{CWV} \Pi_2 \cup \Pi$ (resp.\ $\Pi_1 \cup
  \Pi \equiv_{WV} \Pi_2 \cup \Pi$).
\end{definition}

Along the lines of \cite{aaai:FaberMW19}, we characterize the notion of strong
equivalence via a so-called SE-function, defined as follows:

\begin{definition}\label{def:sefunction}
  The \emph{SE-function} $\calS\calE_\Pi(\cdot)$ of an ELP $\Pi = (\calA,
  \calR)$  maps CWIs $I$ over $\calA$ to sets of SE-models as follows.
  $$\calS\calE_\Pi(I) = \left\{
  \begin{array}{@{}lr@{}}
    \semods{\Pi^I} \qquad & \text{if } I \text{ is compatible with some}\\
    & \text{subset } \calJ \subseteq \mods{\Pi^I}\\[1.5ex]
    \emptyset & \text{otherwise.}
  \end{array}
  \right.$$
\end{definition}

It turns out that this modified version of the SE-function precisely
characterizes strong equivalence in our setting, and that the two notions of
strong equivalence given in Definition~\ref{def:strongequivalence} coincide, as
the following theorem states:

\begin{theorem}\label{thm:strongequivalence}
  Let $\Pi_1$ and $\Pi_2$ be two ELPs. Then, the following statements are
  equivalent:
  \begin{enumerate}
    \item $\Pi_1$ and $\Pi_2$ are strongly CWV-equivalent;
    \item $\Pi_1$ and $\Pi_2$ are strongly WV-equivalent; and
    \item $\calS\calE_{\Pi_1} = \calS\calE_{\Pi_2}$.
  \end{enumerate} 
\end{theorem}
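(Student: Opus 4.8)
The plan is to establish the cycle of implications $(3) \Rightarrow (1) \Rightarrow (2) \Rightarrow (3)$. Throughout, I would first normalize using Proposition~\ref{prop:atomdomainelp} and Lemma~\ref{lem:elitdomain}: assume $\Pi_1$ and $\Pi_2$ share a common atom domain $\calA$ (padded with a supply of spare atoms occurring in neither program, which by Proposition~\ref{prop:atomdomainelp} does not affect either SE-function), and that every epistemic literal over $\calA$ occurs in each program, so that by Theorem~\ref{thm:cvwcorrespondence} I may freely identify CWIs with epistemic guesses and compare $\calS\calE_{\Pi_1}$ and $\calS\calE_{\Pi_2}$ on the same set of CWIs. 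The two routine facts I would record at the outset are the standard SE-model identities: the epistemic reduct distributes over union, so $(\Pi_j \cup \Pi)^I = \Pi_j^I \cup \Pi^I$; and for ordinary logic programs $\semods{P \cup Q} = \semods{P} \cap \semods{Q}$, while $M \in \answersets{P}$ exactly when $(M,M) \in \semods{P}$ and no $(X,M) \in \semods{P}$ has $X \subsetneq M$. Consequently $\answersets{(\Pi_j \cup \Pi)^I}$ is entirely determined by $\semods{\Pi_j^I} \cap \semods{\Pi^I}$.

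For $(3) \Rightarrow (1)$, I fix a CWI $I$ and split on the SE-function value. If $I$ is compatible with some subset of $\mods{\Pi_j^I}$ for both $j$, then $\semods{\Pi_1^I} = \calS\calE_{\Pi_1}(I) = \calS\calE_{\Pi_2}(I) = \semods{\Pi_2^I}$, so by the identities above $(\Pi_1 \cup \Pi)^I$ and $(\Pi_2 \cup \Pi)^I$ have identical SE-models, hence identical answer sets, for \emph{every} $\Pi$; thus $I$ is compatible with one collection iff with the other, giving $I \in \cwvs{\Pi_1 \cup \Pi} \Leftrightarrow I \in \cwvs{\Pi_2 \cup \Pi}$. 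In the complementary case the shared value is $\emptyset$, i.e.\ $I$ is compatible with no subset of $\mods{\Pi_j^I}$ for either $j$; since $\answersets{(\Pi_j \cup \Pi)^I} \subseteq \mods{\Pi_j^I}$ always, that collection can never be compatible with $I$, so $I$ is a CWV of neither side. The mixed case is ruled out under assumption $(3)$, because a compatible value is always nonempty (compatibility forces a nonempty witnessing subset, hence $\semods{\Pi_j^I} \neq \emptyset$). This yields $\cwvs{\Pi_1 \cup \Pi} = \cwvs{\Pi_2 \cup \Pi}$ for all $\Pi$. The step $(1) \Rightarrow (2)$ is then immediate, since by Definition~\ref{def:worldview} the set $\wvs{\cdot}$ is exactly the subset-minimal part of $\cwvs{\cdot}$, so equality of CWV-sets for every extension forces equality of their minimal elements.

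The substantive implication is $(2) \Rightarrow (3)$, which I would prove by contraposition: assuming $\calS\calE_{\Pi_1} \neq \calS\calE_{\Pi_2}$, I construct a single ELP $\Pi$ separating the world views. I pick a CWI $I$ on which the functions differ, distinguishing two cases — either both values are the nonempty compatible value but $\semods{\Pi_1^I} \neq \semods{\Pi_2^I}$, whence I choose a witnessing SE-model $(X,Y)$ in exactly one of them, or one value is $\emptyset$ and the other a nonempty $\semods{\cdot}$, a pure compatibility mismatch. In each case I design $\Pi$ so that its reduct $\Pi^I$ behaves as a classical strong-equivalence gadget in the spirit of the SE-model construction behind \cite{tplp:Turner03,aaai:FaberMW19}: facts and constraints that convert the differing SE-model, or the differing compatibility status, into a difference between $\answersets{\Pi_1^I \cup \Pi^I}$ and $\answersets{\Pi_2^I \cup \Pi^I}$, and thereby into a difference in compatibility with $I$. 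To reach the WV level rather than merely the CWV level, I would further add epistemic control rules (of the tautological kind used in Lemma~\ref{lem:elitdomain}, together with constraints over the spare atoms) arranged so that $I$ becomes the \emph{unique} CWV on one side, hence automatically a WV there, while remaining a non-CWV on the other.

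I expect the last step to be the main obstacle. In plain ASP the distinguishing construction need only separate individual answer sets, whereas here compatibility with $I$ (Definition~\ref{def:compatibility}) is a \emph{global} condition on the whole answer-set collection: I must engineer $\Pi^I$ so that the surviving answer sets collectively place every atom of $I^P$, $I^N$, and $I^U$ on its prescribed side for one program but not the other. Simultaneously controlling the single added ELP $\Pi$ at the guess $I$, where it must realize the gadget, and at all competing guesses $I'$, where it must neither create nor destroy CWVs if minimality of $I$ is to be preserved, is the delicate part, and is where I anticipate the bulk of the case analysis to lie.
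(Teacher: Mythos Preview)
Your overall architecture matches the paper's exactly: the same cycle $(3)\Rightarrow(1)\Rightarrow(2)\Rightarrow(3)$, the same normalization of atom domains, and essentially identical arguments for $(3)\Rightarrow(1)$ (where your treatment is in fact more explicit than the paper's, which simply defers to the $(5)\Rightarrow(1)$ step of \cite{aaai:FaberMW19}) and for $(1)\Rightarrow(2)$.

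The divergence is in the contrapositive step $(2)\Rightarrow(3)$. You correctly isolate the obstacle: an answer-set difference between $(\Pi_1\cup\Pi)^I$ and $(\Pi_2\cup\Pi)^I$ does not by itself yield a compatibility difference with $I$, since compatibility is a global property of the whole answer-set collection. Your proposed remedy is to engineer $\Pi^I$ so that the surviving answer sets collectively realize $I^P,I^N,I^U$ on one side but not the other, and then to add epistemic control rules over spare atoms so that $I$ becomes the \emph{unique} CWV on that side, handling minimality by brute force while policing all competing guesses $I'$. The paper takes a shorter route: it introduces a single fresh atom $a$ and constructs $\Pi$ so that $a$ is true \emph{precisely} in the answer set manufactured from the distinguishing SE-model. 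With $\neg a$ adjoined to $I$, compatibility then fails on exactly the side where that extra answer set exists, and only at the coordinate $a$; no orchestration of the original atoms of $I$ is needed at all. This localizes the separation and eliminates the case analysis over competing $I'$ that you anticipate as ``the delicate part.'' What your route buys is an explicit treatment of WV-minimality, which the paper's sketch leaves largely implicit; what the paper's route buys is a one-atom gadget that dissolves the global-compatibility obstacle you identified.
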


\begin{proof}[Proof (Sketch)]
  Via Proposition~\ref{prop:atomdomainelp}, we can always assume that $\Pi_1$
  and $\Pi_2$ have the same atom domain. Now, $(1) \Rightarrow (2)$ follows from
  Definition~\ref{def:strongequivalence} and the fact that every WV is a CWV.
  $(3) \Rightarrow (1)$ follows by the same argument as $(5) \Rightarrow (1)$ in
  the proof of Theorem~15 in \cite{aaai:FaberMW19}. Finally, $(2) \Rightarrow
  (3)$ can be established by showing the contrapositive. We can construct $\Pi$
  in such a way that it realizes a particular CWI in the SE-function as a CWV
  for both $\Pi_1 \cup \Pi$ and $\Pi_2 \cup \Pi$ (which, by construction of the
  SE-function, is always possible). We chose a CWI $I$ where the SE-function
  contains a difference. We then make use of the fact that, for this CWI $I$,
  there is a difference in the SE-models of the epistemic reducts $(\Pi_1 \cup
  \Pi)^I$ and $(\Pi_2 \cup \Pi)^I$. However, simply realizing this difference is
  not enough to show non-strong-equivalence (as opposed to the SE-CWV semantics,
  where this would already suffice).  Instead, we need to make sure that $I$
  actually becomes a CWV only in $\Pi_1 \cup \Pi$ but not for $\Pi_2 \cup \Pi$
  (w.l.o.g., by symmetry). This can be done by introducing a new atom, say $a$,
  not appearing in $\Pi_1$ or $\Pi_2$, and constructing $\Pi$ in such a way that
  this atom is true precisely in the answer set created from the SE-model that
  marks the difference in the SE-functions of $\Pi_1$ and $\Pi_2$. Thus, since
  this atom $a$ is false in every answer set of $(\Pi_1 \cup \Pi)^I$, but true
  in exactly one answer set of $(\Pi_2 \cup \Pi)^I$, clearly, $I$ cannot be a
  CWV of both, showing $(2) \Rightarrow (3)$ via the contrapositive.
\end{proof}

This shows that the two notions of strong WV-equivalence and strong
CWV-equivalence coincide. Interestingly, since our construction of the
SE-function mirrors the one proposed in \cite{aaai:FaberMW19}, we observe that
their different notions of strong equivalence and the ones proposed in this
paper coincide. Hence, Corollary~16 from \cite{aaai:FaberMW19} also holds for
our setting:

\begin{corollary}
  ELPs are strongly equivalent if and only if they have the same SE-function.
\end{corollary}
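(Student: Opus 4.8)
The plan is to obtain the Corollary as a direct consequence of Theorem~\ref{thm:strongequivalence}, combined with the analogous characterization established for the SE-CWV semantics in \cite{aaai:FaberMW19}. Theorem~\ref{thm:strongequivalence} already shows that, in our setting, strong CWV-equivalence and strong WV-equivalence both hold precisely when the two programs share the same SE-function $\calS\calE_\Pi(\cdot)$ of Definition~\ref{def:sefunction}. What remains is to tie these two notions to the notions of strong SE-CWV- and strong SE-WV-equivalence based on the world-view definition of \cite{ai:ShenE16}, so that the single unqualified phrase ``strongly equivalent'' is justified.

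The key observation I would exploit is that the SE-function of Definition~\ref{def:sefunction} is, by construction, the same object used in \cite{aaai:FaberMW19} to characterize strong SE-CWV-equivalence. The only point requiring justification is that the side condition in our definition --- namely that $I$ be compatible with some subset $\calJ \subseteq \mods{\Pi^I}$ --- coincides with the corresponding realizability condition on epistemic guesses in their setting. This follows from the one-to-one correspondence between CWVs and SE-CWVs established in Theorem~\ref{thm:cvwcorrespondence}: a CWI $I$ and its associated guess $\Phi$ induce identical epistemic reducts $\Pi^I = \Pi^\Phi$, so $I$ is realizable as a CWV of an extension exactly when $\Phi$ is realizable as an SE-CWV. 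Hence the two SE-functions agree pointwise on every CWI. Given this identification, the corresponding result of \cite{aaai:FaberMW19} (their Corollary~16, equivalently the $(3)\Rightarrow(1)$ direction of their Theorem~15) yields that strong SE-CWV-equivalence, and likewise strong SE-WV-equivalence, hold iff the SE-functions agree. Combined with Theorem~\ref{thm:strongequivalence}, all four notions of strong equivalence collapse into the single condition $\calS\calE_{\Pi_1} = \calS\calE_{\Pi_2}$, which is exactly the Corollary.

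I expect the main obstacle to lie in the careful verification that the two SE-function definitions are literally identical rather than merely analogous. In particular, one must check that the compatibility condition of Definition~\ref{def:compatibility}, phrased in terms of the three-valued partition $(I^P, I^N, I^U)$, lines up exactly with the guess-realizability condition of \cite{aaai:FaberMW19}, phrased in terms of epistemic literals, and that the two ``otherwise $\emptyset$'' branches (the inconsistent, unrealizable cases) coincide. The correspondence theorem does the heavy lifting, but confirming this pointwise agreement over all CWIs, including the degenerate cases where $\answersets{\Pi^I} = \emptyset$, is the part that must be done with care.
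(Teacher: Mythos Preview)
Your proposal is correct and follows essentially the same approach as the paper: the corollary is obtained from Theorem~\ref{thm:strongequivalence} together with the observation that the SE-function of Definition~\ref{def:sefunction} mirrors the one in \cite{aaai:FaberMW19}, so that their characterization (Corollary~16) carries over and all four strong-equivalence notions collapse to the single SE-function condition. You are in fact more careful than the paper about justifying that the two SE-function definitions literally agree (via the CWV/SE-CWV correspondence of Theorem~\ref{thm:cvwcorrespondence}), whereas the paper simply asserts that its construction ``mirrors'' the one in \cite{aaai:FaberMW19}.
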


This is somewhat surprising, since, as we have seen, our notion of (ordinary)
equivalence is a more general one. From this observation, the following
complexity result immediately follows from Theorem~20 in \cite{aaai:FaberMW19}:

\begin{theorem}\label{thm:complexity}
  Checking whether two ELPs are strongly equivalent is \co\NP-complete.
\end{theorem}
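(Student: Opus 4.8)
The plan is to build directly on the characterisation just established: by the preceding Corollary, two ELPs are strongly equivalent (in either the CWV or WV sense, which coincide by Theorem~\ref{thm:strongequivalence}) if and only if they induce the same SE-function, so the decision problem reduces to testing equality of SE-functions. I would prove \co\NP-completeness by establishing membership in \co\NP\ and \co\NP-hardness separately; since our SE-function is defined exactly as in \cite{aaai:FaberMW19}, the upper and lower bounds of their Theorem~20 transfer, but I sketch self-contained arguments below.

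For membership, I would show that the complement---non-strong-equivalence---lies in \NP. A witness is a single CWI $I$ (a polynomial-size three-valued assignment over $\calA$) at which the two SE-functions disagree, together with short certificates. The key preliminary observation is that $\calS\calE_{\Pi_1}(I)$ and $\calS\calE_{\Pi_2}(I)$ can only differ if $\semods{\Pi_1^I} \neq \semods{\Pi_2^I}$: if the reducts have the same SE-models then they have the same classical models (the diagonal SE-models $(Y,Y)$), and since the compatibility side-condition depends only on $\mods{\Pi^I}$ and $I$, both SE-functions take the identical verdict (either both $\semods{\cdot}$ or both $\emptyset$). Moreover, whenever the values differ, at least one of the two programs must be compatible at $I$. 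Hence the \NP-verifier guesses $I$, an SE-model $(X,Y)$ witnessing $\semods{\Pi_1^I} \neq \semods{\Pi_2^I}$ (membership of $(X,Y)$ in $\semods{\Pi^I}$ is polynomial-time checkable), and a compatibility certificate for whichever program is compatible at $I$. The latter certificate is itself polynomial: $I$ is compatible with some subset of $\mods{\Pi^I}$ iff the set $\calJ^{*}$ of all models of $\Pi^I$ consistent with $I$ on $I^P$ and $I^N$ is non-empty and, for every $a \in I^U$, contains one model with $a$ and one without, which is witnessed by at most $1 + 2\,\card{I^U}$ explicit models. Crucially, this certified data already forces the two SE-functions to differ at $I$: the certified-compatible side has value $\semods{\cdot} \neq \emptyset$, while the other side is either the distinct SE-model set or $\emptyset$.

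For hardness, I would reduce from strong equivalence of ordinary logic programs, which is \co\NP-complete \cite{tplp:Turner03,tplp:PearceTW09}. Given plain programs $\Pi_1,\Pi_2$ over $\calA$, view them as ELPs containing no epistemic literals; then $\Pi_j^I = \Pi_j$ for every CWI $I$. If $\semods{\Pi_1} = \semods{\Pi_2}$ the SE-functions coincide on all $I$ by the observation above, so the ELPs are strongly equivalent. Conversely, if $\semods{\Pi_1} \neq \semods{\Pi_2}$, I exhibit a distinguishing $I$: taking $I^P$, $I^N$, $I^U$ to record the atoms that are, respectively, true in all, false in all, and varying among the models of a consistent one of the two programs makes that program compatible at $I$, so by the membership analysis the SE-functions already differ there. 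Thus the two notions of strong equivalence collapse exactly when the underlying programs are strongly equivalent, giving a polynomial-time (indeed identity) reduction and \co\NP-hardness. Alternatively, since our semantics coincides with that of \cite{aaai:FaberMW19} by the Corollary, their hardness result applies verbatim.

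The main obstacle is the asymmetry of the compatibility side-condition: deciding that $I$ is \emph{not} compatible is a \co\NP\ question, so naively a witness for the ``non-empty versus empty'' case would require certifying incompatibility of one side, which does not fit into \NP. The decisive point that makes the membership argument go through is that one never needs to certify incompatibility: producing a distinguishing SE-model of the reducts together with a positive compatibility certificate for a single side is already sufficient to force the SE-function values apart, so all certificates remain in \NP.
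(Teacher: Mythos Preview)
Your proposal is correct and, in its core, matches the paper's own argument: the paper proves Theorem~\ref{thm:complexity} by a single sentence, observing that since the SE-function characterisation coincides with that of \cite{aaai:FaberMW19}, their Theorem~20 applies directly---which is exactly the transfer you invoke at the outset and again at the end. Your additional self-contained membership and hardness sketches go beyond what the paper itself supplies; they are sound (in particular, your key observation that equal SE-model sets of the reducts force equal compatibility verdicts, so a single distinguishing SE-model together with a one-sided compatibility certificate suffices for the \NP\ witness, neatly sidesteps the need to certify incompatibility), but the paper does not spell any of this out and simply defers to the cited result.
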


\section{QSAT Solving: A Case Study}\label{sec:casestudies}

In this section, we will look at a brief case study to see how our new semantics
can be used to obtain intuitive information about problem solutions from an ELP
encoding for that problem in a much more straightforward way than with SE-CWVs.

To this end, we will be looking at the problem of solving quantified
satisfiability problems. In particular, we will look at the problem of solving
SAT formulas of the form $$\exists \varsX \forall \varsY \exists \varsZ \,
\Psi,$$ that is, existential QSAT formulas with three quantifier blocks. We will
further assume that the formula $\Psi$ is of the form $\bigwedge_{j=1}^n
(\ell_j^1 \vee \ell_j^2 \vee \ell_j^3)$, that is, a formula in conjunctive
normal form with three literals each. Each literal $\ell_j^i$ may be of the form
$w$ or $\neg \constant{w}$, where $\constant{w} \in \varsX \cup \varsY \cup
\varsZ$. We will follow the reduction from the proof of Theorem~5 in
\cite{ai:ShenE16}, and assume, w.l.o.g., that when all the variables in $\varsY$
are set to $\top$, then the formula becomes a tautology (independent of the
variables $\varsX$ and $\varsZ$). We will write our encoding in non-ground form,
since this should make it easier to read, and the intuitive meaning of the
non-ground rules is easier to grasp. Note that, to obtain the ground version of
the encoding, which then directly corresponds to the construction in
\cite{ai:ShenE16}, we simply substitute all the variables in the encoding with
all possible combinations of constants. This process is also called
\emph{grounding}; cf.\ e.g.\ \cite[Section 2.2.2]{ai:ShenE16}.

\paragraph{The Set of Facts.} Assume that the QSAT formula is represented by the
following facts:

\begin{itemize}
  \item $\fullatom{var_1}{\constant{x}}$, for each $\constant{x} \in \varsX$;
  \item $\fullatom{var_2}{\constant{y}}$, for each $\constant{y} \in \varsY$;
  \item $\fullatom{var_3}{\constant{z}}$, for each $\constant{z} \in \varsZ$;
    and
  \item $\fullatom{clause}{\constant{w}_1, \eta_1, \constant{w}_2, \eta_2,
    \constant{w}_3, \eta_3}$, for each clause $\ell_j^1 \vee \ell_j^2 \vee
    \ell_j^3$, $0 < j \leq n$, in $\Psi$, where for $0 < i \leq 3$, $\eta_j = 0$
    if $\ell_j^i = \constant{w}_i$, or $\eta_j = 1$ if $\ell_j^i = \neg
    \constant{w}_i$.
\end{itemize}

\paragraph{The Rules.} Now, let us construct the rules to actually solve this
problem. Note that the rules are non-ground, that is, they may contain
variables. Rules with variables can be seen as an abbreviation for multiple
copies of these rules, where each variable is replaced by some constant from the
atom domain. In this way, a non-ground ELP can be turned into a (ground) ELP as
defined in Section~\ref{sec:preliminaries}.

\begin{itemize}
  \item First, using epistemic negation, we guess an assignment for the
    variables in $\varsX$ of our formula: $$\fullatom{assign}{\varX, 0} \gets
    \eneg \fullatom{assign}{\varX, 1}, \fullatom{var_1}{\varX},$$
    $$\fullatom{assign}{\varX, 1} \gets \eneg \fullatom{assign}{\varX, 0},
    \fullatom{var_1}{\varX}.$$
  \item Then, we guess an assignment for the variables in $\varsY$:
    $$\fullatom{assign}{\varY, 0} \vee \fullatom{assign}{\varY, 1} \gets
    \fullatom{var_2}{\varY}.$$
  \item For the variables in $\varsX$, we use the standard ASP modelling
    technique of saturation \cite{amai:EiterG95}, in order to quantify over the
    variables in $\varsX$, as follows: $$\fullatom{assign}{\varZ, 0} \vee
    \fullatom{assign}{\varZ, 1} \gets \fullatom{var_3}{\varZ},$$
    $$\fullatom{assign}{\varZ, 0} \gets \relation{sat},$$
    $$\fullatom{assign}{\varZ, 1} \gets \relation{sat},$$ $$\relation{esat}
    \gets \eneg \relation{esat}, \eneg \neg \relation{sat}.$$
\end{itemize}

This completes the construction. Note that this non-ground encoding is a
straight-forward lifting of the ground version of this encoding presented in
\cite{ai:ShenE16}. Correctness hence follows from their paper. However, we would
like to point out the main difference between the SE-CWV semantics and the
semantics proposed in this paper. In the case of SE-CWV semantics, an ELP solver
would output a large set of answer sets, grouped together by their participation
in an SE-CWV. This makes it tedious to interpret the result, since this possibly
very large set of answer sets has to be carefully parsed and post-processed in
order to ascertain which atoms are always true, which are always false, and to
extract a truth assignment for the variables in $\varsX$ of the original QSAT
formula. While this process can be done without much difficulty by a computer,
the information is not directly accessible by a human user. In contrast, our
semantics would simply yield a number of CWVs, where, by construction, each CWV
would contain, for each variable $\constant{x} \in \varsX$, exactly one of the
two atoms $\fullatom{assign}{\constant{x}, 0}$ or
$\fullatom{assign}{\constant{x}, 1}$, allowing a hypothetical end-user to
directly extract a ``solution'' to the QSAT formula in the form of the truth
assignment to the variables in $\varsX$. In addition, every CWV under our
semantics represents exactly one valid truth assignment to the variables of
$\varsX$. In the authors' opinion, this seems like a much more intuitive
representation of the solutions of the QBF than the SE-CWVs.

\section{Conclusions}\label{sec:conclusions}

In this paper, we have presented a novel take on the semantics of epistemic
logic programs. When dealing with such ELPs under the semantics proposed by Shen
and Eiter \cite{ai:ShenE16}, we argued that the information that is actually of
interest to an end user is ``hidden'' in the so-called epistemic guess, and is
only implicitly present when SE-CWVs (which are sets of answer sets) are
computed. We therefore propose a novel notion of ``world view'' that directly
encodes this information as a three-valued assignment to the atoms of an ELP,
presenting immediately accessible information to the end user about which atoms
are always true, always false, or neither. We investigated complexity questions
and notions of equivalence between ELPs, as well as the relationship between our
new semantics and the one in \cite{ai:ShenE16}.

Future work will include implementing our semantics into an ELP solving system,
and investigating, whether practical solving shortcuts can be found, since, as
opposed to the SE-CWV semantics, not all the answer sets of the epistemic
reducts need to be computed. Further, note that our new definition of the world
view is not restricted to the semantics proposed in \cite{ai:ShenE16}, but can
also be directly applied to other semantics, e.g.\ the ones proposed by Gelfond
in \cite{aaai:Gelfond91,lpnmr:Gelfond11}, or by Kahl et al. in
\cite{diss:Kahl14,logcom:KahlWBGZ15}. We would like to do a similar
investigation w.r.t.\ these semantics, as the one done in this paper w.r.t.\ the
semantics from \cite{ai:ShenE16}.

%

\appendix

\bibliographystyle{eptcs}
\bibliography{references}

\end{document}